\documentclass[a4paper,preprint,11pt]{elsarticle}
\usepackage{amsmath, amssymb, amsfonts, amsthm, float, hyperref}
\usepackage{enumitem}
\usepackage{placeins} 
\usepackage[caption=false]{subfig} 
\DeclareMathOperator\erf{erf}

\newtheorem{theorem}{\bf Theorem}[section]
\newtheorem{corollary}{\bf Corollary}[section]
\newtheorem{definition}{\bf Definition}[section]
\newtheorem{remark}{\bf Remark}[section]

\begin{document}
\begin{frontmatter}
\title{Stability of calibration procedures: \linebreak fractals in the Black-Scholes model}

\author[add1]{Yiran Cui\corref{cor}}
\ead{y.cui.12@ucl.ac.uk}
\author[add2]{Sebastian del Ba\~{n}o Rollin}
\ead{s.delbanorollin@qmul.ac.uk}
\author[add1,add3]{Guido Germano\corref{funding}}
\ead{g.germano@ucl.ac.uk, g.germano@lse.ac.uk
}

\address[add1]{Financial Computing and Analytics Group, Department of Computer Science, \linebreak University College London, United Kingdom}
\address[add2]{School of Mathematical Science, Queen Mary University of London, United Kingdom}
\address[add3]{Systemic Risk Centre, London School of Economics and Political Science, \linebreak United Kingdom}

\cortext[cor]{Corresponding author. }
\cortext[funding]{This author acknowledge the support of Economic and Social Research Council (ESRC) in funding the
Systemic Risk Centre (grant number ES/K002309/1).} 

\begin{abstract}
Usually, in the Black-Scholes pricing theory the volatility is a positive real parameter. 
Here we explore what happens if it is allowed to be a complex number.
The function for pricing a European option with a complex volatility has essential singularities at zero and infinity.
The singularity at zero reflects the put-call parity.
Solving for the implied volatility that reproduces a given market price yields not only a real root, but also infinitely many complex roots in a neighbourhood of the origin.
The Newton-Raphson calculation of the complex implied volatility has a chaotic nature described by fractals.
\end{abstract}

\begin{keyword}
Black-Scholes, model calibration, Newton-Raphson method, implied volatility, fractal, Julia set
\end{keyword}
\end{frontmatter}

\section{Introduction}\label{sec:fractal_intro}
Every day, the implied volatility consistent with the given price of a European option is computed millions of times in trading and risk management systems throughout the financial industry. 
This is typically done with the Newton-Raphson method, which can exhibit chaotic phenomena when hunting a successively better approximation to the root. 
These phenomena are best described in the complex plane by means of the associated fractal Julia sets. 
In a one-page paper published in 1879, Cayley \cite{cayley1879} first suggested the difficulty in extending the Newton-Raphson method (which he called Newton-Fourier) to the following cases:
\begin{enumerate}[series=CayleyClass]
\item complex polynomials with degree higher than 2;
\item a complex initial value leading the following iterations;
\item the allowance of complex roots.
\end{enumerate}
Furthermore, in the study of this problem he proposed a concept which later was called an attraction basin:
\begin{quotation}
\emph{``To determine the region of the plane, such that $P$ (the initial point) being taken at pleasure anywhere within one region we arrive ultimately at the point $A$ (a root); anywhere within another region at the point $B$; and so for the several points representing the roots of the equation.''}
\end{quotation}

In this paper, we extend the Black-Scholes valuation \cite{Black1973} to a complex implied volatility parameter, allowing the initial value of the Newton-Raphson method to be complex; then we explore the fractal objects that describe the chaotic nature of the Newton-Raphson calculation of the implied volatility.

\subsection{Implied volatility}
The Black-Scholes model does not adequately take into account important characteristics of the market dynamics such as skewness, fat tails and the correlation between the asset's value and its volatility. Other models have been devised to better approximate the fair price of derivatives, as discussed in a large body of research.
However, dealers still prefer to describe the price of an option $V$, obtained either by these refined models or from a market quote, in terms of the volatility $\sigma$ such that the Black-Scholes formula replicates the given price. 
This parameter $\sigma$ is called the \emph{implied volatility} and is often described, following Rebonato \cite[pg.~169]{rebonato2004}, as:
\begin{quotation}
 \emph{``the wrong number to put in the wrong formula to get
the right price of plain-vanilla (European) options''}.
\end{quotation}
From the perspective of a trader, implied volatility results from a rescaling process that allows to compare the relative worth of options with different maturities or involving different assets or currencies, where a crude comparison in terms of premium would be inapplicable. For similar reasons it is also used in the interpolation of prices of options with different maturities and strikes.

\subsection{Numerical scheme to calculate the implied volatility}
To calibrate the implied volatility $\sigma \in \mathbb{R}_+$, the Newton-Raphson method is used to solve the equation which matches the market price $V$ and the Black-Scholes valuation for a European option:
\begin{equation} \label{Black-Scholespv}
V=e^{-rT}\theta
\left\{
F\Phi\left[\theta\left(\frac{\log\frac{F}{K}}{\sigma\sqrt{T}}+\frac{1}{2}\sigma\sqrt{T}\right)\right]
-
K\Phi\left[\theta\left(\frac{\log\frac{F}{K}}{\sigma\sqrt{T}}-\frac{1}{2}\sigma\sqrt{T}\right)\right]
\right\},
\end{equation}
where $\theta = 1$ for call options and $\theta = -1$ for put options,
$F:=S_0e^{(r-d)T}$ is the fair forward price to maturity $T$, $r$ is the (domestic) risk-free interest rate, $d$ is the dividend rate (or foreign interest rate for a foreign exchange rate contract), $S_0$ is the spot price of the underlying, $K$ is the strike price, and $\Phi(\cdot)$ is the standard normal cumulative distribution function
\begin{equation}
\Phi(x)=\frac{1}{\sqrt{2\pi}} \int_{-\infty}^x \exp\left(-\frac{u^2}{2}\right)\mathrm{d}u,
\end{equation}
which can be expressed through the error function as
\begin{subequations}
\begin{align}
\Phi(x)
& =\frac{1}{2}+\frac{1}{\sqrt{2\pi}}\int_{0}^{x} \exp\left(-\frac{u^2}{2}\right)\mathrm{d}u\label{eq:RealNormalCDF}\\
&= \frac{1}{2}+\frac{1}{2}\erf \frac{x}{\sqrt{2}}.
\end{align}
\end{subequations}

If we denote by $f(\sigma)$ the right-hand side of equation \eqref{Black-Scholespv}, then the Newton-Raphson iteration to solve $V = f(\sigma)$ is given by
\begin{equation}\label{Black-Scholesiteration}
\sigma_{n+1}=\sigma_n -\frac{f(\sigma_n)-V}{f'(\sigma_n)},
\end{equation}
starting from an arbitrary initial guess $\sigma_0\in\mathbb{R}_+$.
However, as explained by J\"ackel \cite{jackel2006}, $f(\sigma)$ is convex for low volatilities and concave for higher volatilities, causing instabilities in the algorithm. J\"ackel's article demonstrates how taking the logarithm of both the market price and the Black-Scholes price overcomes this convergence problem.

Inspired by the chaotic phenomena arising in the Newton-Raphson search for an implied volatility close to the origin, where $f(\sigma)$ is too flat according to J\"ackel, we perform a new experiment on the calibration of implied volatility where its search domain is extended to the complex plane by starting from an initial guess $\sigma_0 \in\mathbb{C}$ and the Black-Scholes price is extended as an analytic function on $\mathbb{C}_*=\mathbb{C}\setminus\{ 0\}$ with essential singularities at zero and infinity. We observe infinitely many complex roots for equation \eqref{Black-Scholespv} other than the real one, as will be illustrated by means of fractal attraction basins.

\section{Analytic extension of the pricing function}
In this section, we show the analytic extension of the major component of the Black-Scholes pricing function, i.e., the normal cumulative distribution function, and therefore the extension of the pricing function itself. We discuss the singularity of this function and hence the complex roots of equation \eqref{Black-Scholespv}. Illustrations are given at the end.

\subsection{Analytic extension of the cumulative normal distribution}
The standard normal cumulative distribution function $\Phi(z)$ with a complex argument $z\in\mathbb{C}$ is related to several special functions that arise often in applied mathematics and engineering;
see for example Fettis, Caslin and Cramer \cite{fettis1973} for an analysis of the zeros of $\erf(z)$.
\begin{theorem}
The cumulative density function of the standard normal distribution can be extended as a complex entire function on $\mathbb{C}$.
\end{theorem}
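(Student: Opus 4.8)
The plan is to work from the representation \eqref{eq:RealNormalCDF}, in which the cumulative distribution function is written as $\Phi(x)=\tfrac12+\tfrac{1}{\sqrt{2\pi}}\int_0^x\exp(-u^2/2)\,\mathrm{d}u$. The decisive observation is that the integrand $g(u):=\exp(-u^2/2)$ is itself an entire function of $u$, being the composition of the polynomial $-u^2/2$ with the entire exponential. I would emphasise at the outset why this particular representation, rather than the original $\int_{-\infty}^x$, is the right starting point: the improper lower limit $-\infty$ does not extend naively to $\mathbb{C}$, because $\exp(-u^2/2)$ fails to decay as $u\to\infty$ along directions off the real axis, so there is no canonical contour running out to $-\infty$. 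Replacing the improper integral by a proper one from the finite base point $0$, at the cost of the additive constant $\tfrac12$, removes this difficulty entirely.

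From here I would give the argument in either of two equivalent ways. The first route is the primitive argument: since $g$ is entire, Cauchy's theorem makes the integral $G(z):=\int_0^z g(u)\,\mathrm{d}u$ independent of the path joining $0$ to $z$, so $G$ is a well-defined single-valued function on all of $\mathbb{C}$, and the fundamental theorem of calculus for holomorphic functions gives $G'(z)=g(z)$; hence $G$, and therefore $\Phi(z)=\tfrac12+G(z)/\sqrt{2\pi}$, is holomorphic on every disc and thus entire. The second route makes the extension explicit: expanding $g(u)=\sum_{n\ge0}\frac{(-1)^n}{2^n n!}u^{2n}$, a series that converges uniformly on every compact subset of $\mathbb{C}$, I would integrate term by term to obtain
\begin{equation}
\Phi(z)=\frac12+\frac{1}{\sqrt{2\pi}}\sum_{n=0}^\infty \frac{(-1)^n}{2^n\,n!\,(2n+1)}\,z^{2n+1},
\end{equation}
and then verify by the ratio (or root) test that this power series has infinite radius of convergence, which again identifies $\Phi$ as entire.

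Two points would need explicit checking, neither of them deep. First, that the constructed function genuinely extends the real $\Phi$: restricting $z$ to the real axis reproduces \eqref{eq:RealNormalCDF} by construction, and since the extension is holomorphic and agrees with a real-analytic function on $\mathbb{R}$, the identity theorem guarantees it is the unique entire extension. Second, the uniform convergence that licenses the term-by-term integration must be invoked before swapping sum and integral, but this is immediate from the fact that a power series converges uniformly on compacta inside its disc of convergence. I expect no genuine obstacle here; the only conceptual subtlety, and the part worth stating carefully, is the choice of the finite-base-point representation that sidesteps the ill-behaved limit at $-\infty$, after which entirety follows from the standard fact that the primitive of an entire function is entire.
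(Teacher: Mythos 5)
Your first route is precisely the paper's proof: the paper observes that $\exp(-z^2/2)$ is entire and invokes general results of complex analysis to conclude that $\Phi(z)=\tfrac{1}{2}+\tfrac{1}{\sqrt{2\pi}}\int_0^z \exp\left(-\tfrac{u^2}{2}\right)\mathrm{d}u$ is well defined and entire, with its restriction to the real line recovering the real cumulative distribution function --- you merely spell out the path-independence and primitive details the paper leaves to its citation. Your power-series construction and the identity-theorem uniqueness remark are correct additions beyond what the paper records, but they do not change the substance of the argument.
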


\begin{proof}
Given that $\exp(-z^2/2)$ is an entire function, by general results of complex analysis \cite{ahlfors1966} the function 
\begin{equation}
\Phi(z)=\frac{1}{2}+\frac{1}{\sqrt{2\pi}}\int _{0}^{z} \exp\left(-\frac{u^2}{2}\right)\mathrm{d}u
\end{equation}
is well defined and entire. Trivially, its restriction to the real line is the function $\Phi(x)$ defined in equation \eqref{eq:RealNormalCDF}.
\end{proof}

An analytic function is said to have an isolated singularity at a point if the function is analytic in a neighbourhood of the point with the point excluded. 
Isolated singularities of analytic functions in one variable are classified as \cite{ahlfors1966}:
\begin{description}
\item[Removable] 
if the function can be assigned a value at that point such that the resulting extended function is analytic. A typical example of a removable singularity is $z=0$ for $f(z) = (\mathrm{sin}z)/z$.
\item[Pole] 
if the norm of the function tends to $\infty$ as that point is approached. A typical example of a pole singularity is $z=0$ for $f(z)=1/z$. 
\item[Essential] 
in all other cases. A typical example of an essential singularity is $z=0$ for $f(z)=\exp(1/z)$.
\end{description}

\begin{remark}\label{rmk:CDFalongIm}
The complex cumulative normal distribution function $\Phi(z)$ has an essential singularity at $z=\infty$. 
This is because along the real axis, when $z\to +\infty$, $\Phi(z)\to 1$ and when $z\to -\infty$, $\Phi(z)\to 0$. 
On the other hand, along the imaginary axis, $\Phi(z)$ is unbounded for $z \to \pm i\infty $: for real $y$, substituting $u = iv$,
\begin{subequations}
\begin{align}
\Phi(iy)&=
\frac{1}{2}
+
\frac{1}{\sqrt{2\pi}}\int _{0}^{iy} \exp\left(-\frac{u^2}{2}\right)\mathrm{d}u \\ 
&=
\frac{1}{2}
+i
\frac{1}{\sqrt{2\pi}}\int _{0}^y \exp\frac{v^2}{2} \mathrm{d}v,
\end{align}
\end{subequations}
which grows to $\pm i\infty$ as $y\to\pm\infty$.
\end{remark}

\subsection{Analytic extension of the Black-Scholes pricing formula}

\begin{theorem}
The Black-Scholes price as a function of the volatility $f(\sigma)$ can be extended as an analytic function on  $\mathbb{C}_*=\mathbb{C}\setminus\{ 0\}$. The singularities at zero and infinity are essential.
\end{theorem}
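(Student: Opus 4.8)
The plan is to first produce the analytic continuation explicitly and then to certify the nature of the two singular points. Writing
\begin{equation}
d_\pm(\sigma) = \frac{\log\frac{F}{K}}{\sigma\sqrt{T}} \pm \frac{1}{2}\sigma\sqrt{T},
\end{equation}
the right-hand side of \eqref{Black-Scholespv} becomes $f(\sigma) = e^{-rT}\theta\left[F\,\Phi\!\left(\theta d_+(\sigma)\right) - K\,\Phi\!\left(\theta d_-(\sigma)\right)\right]$. Since $\sigma \mapsto 1/\sigma$ is analytic on $\mathbb{C}_*$ and $\sigma \mapsto \sigma$ is entire, both $d_\pm$ are analytic on $\mathbb{C}_*$; composing with the entire function $\Phi$ of the preceding theorem and taking a constant-coefficient linear combination, I conclude that $f$ is analytic on $\mathbb{C}_*$. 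This settles the first assertion.

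For the classification I would rule out, at each of $\sigma=0$ and $\sigma=\infty$, both a removable singularity and a pole, which by the trichotomy recalled above leaves only the essential case. The decisive tool is Remark~\ref{rmk:CDFalongIm}: $\Phi$ stays bounded (with limits $0$ and $1$) as its argument runs to infinity along the real axis, but is unbounded as its argument runs to infinity along the imaginary axis. The strategy is therefore to approach each singular point along two directions in the $\sigma$-plane chosen so that the argument $\theta d_\pm(\sigma)$ is driven to infinity along the real axis in one case and along the imaginary axis in the other.

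Concretely, at $\sigma=0$ (assuming the generic case $F\neq K$, so that the $\log(F/K)/(\sigma\sqrt{T})$ term is genuinely singular) I would set $\sigma = it$ with real $t\to 0^+$; then $d_\pm(it)$ is purely imaginary and $|d_\pm(it)|\to\infty$, so by Remark~\ref{rmk:CDFalongIm} $|f(it)|\to\infty$ and the singularity is not removable. Taking instead $\sigma = s$ real with $s\to 0^+$ sends $d_\pm(s)\to\pm\infty$ along the real axis, so $\Phi(\theta d_\pm(s))$ tends to $0$ or $1$ and $f(s)$ converges to a finite limit; hence $f$ does not tend to $\infty$ and the singularity is not a pole. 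The point $\sigma=0$ is therefore essential. The same scheme handles $\sigma=\infty$: along $\sigma=iM$, $M\to+\infty$, the term $\tfrac12\sigma\sqrt{T}$ dominates and $\theta d_+(iM)$ runs to imaginary infinity, forcing $|f|\to\infty$ (no removable singularity), whereas along real $\sigma\to+\infty$ one has $d_+\to+\infty$, $d_-\to-\infty$ and $f$ tends to the finite value $e^{-rT}F$ for a call (no pole); note this direction works even when $F=K$.

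The step I expect to be the main obstacle is the verification of unboundedness: one must choose the approach path so that the argument of $\Phi$ lands exactly on (not merely near) the imaginary axis, which is why substituting a purely imaginary $\sigma$ is convenient --- because $F$, $K$ and $T$ are real, both terms of $d_\pm$ are then purely imaginary and Remark~\ref{rmk:CDFalongIm} applies verbatim. The accompanying subtlety is the degenerate at-the-money-forward case $F=K$, in which the $1/\sigma$ contribution to $d_\pm$ vanishes, $f$ becomes entire and the singularity at zero disappears; the statement at $\sigma=0$ should therefore be read under the standing assumption $F\neq K$, while the essential singularity at $\infty$ persists in all cases.
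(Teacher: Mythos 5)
Your proposal is correct and takes essentially the same route as the paper: the paper also invokes the criterion that a singularity is essential when the limit neither exists finitely nor equals $\infty$, approaches $\sigma=0$ (and, mutatis mutandis, $\sigma=\infty$) along the imaginary axis so that $d_\pm\to\pm i\infty$, and cites Remark~\ref{rmk:CDFalongIm} on the behaviour of $\Phi$ there, while ruling out a removable singularity and a pole as the alternative cases. Your extras --- the explicit composition argument for analyticity on $\mathbb{C}_*$, the real-axis limits excluding a pole, and the caveat that for $F=K$ the singularity at zero is removable (the paper only hints at this via ``depending on the ratio $F/K$'') --- merely make explicit what the paper leaves implicit, and you share with the paper the same unexamined fine point, namely that $f$ is a \emph{difference} of two unbounded $\Phi$ terms whose leading exponentials in fact cancel, with divergence surviving only in a lower-order term.
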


\begin{proof}
The necessary and sufficient condition for a point $z_0$ to be an essential singularity of $f(z)$ is that $\lim_{z\to z_0} f(z)$ does not exist. 
Let $d_\pm:=\log(F/K)/(\sigma\sqrt{T})\pm\sigma\sqrt{T}/2$, then it is easily seen that when $\sigma$ approaches zero along the imaginary axis, $d_\pm$ approach $+i\infty$ or $-i\infty$, depending on the ratio $F/K$. As shown in remark \ref{rmk:CDFalongIm}, $\lim_{y\to \infty}\Phi(iy)$ is indefinite. Thus $\lim_{z\to z_0} f(z)$ equals neither a finite complex number nor $\infty$, i.e.,\ the limit does not exist. Similarly one can verify that the singularity at $\sigma=\infty$ is essential.

From another perspective, the singularity is not removable, as otherwise it would have a zero Taylor expansion, and it is not a pole either as otherwise the function would tend to infinity.
\end{proof}

\begin{remark}\label{rmk:fractal_putcall}
If we denote by $f(\sigma)$ the RHS of equation \eqref{Black-Scholespv}, then we have that $f(-\sigma)$ is the opposite of the price of the put option with the same maturity and strike. This has as a consequence that along the real axis
\begin{eqnarray}
\lim _{\sigma\rightarrow 0^+} f(\sigma) &=& 
\begin{cases} e^{-rT} (F-K) &\mbox{if } F\geq K \\
0 & \mbox{if } F\leq K \end{cases} 
\\
\lim _{\sigma\rightarrow 0^-} f(\sigma) &=& 
\begin{cases}  0&\mbox{if } F\geq K \\
e^{-rT} (F-K) & \mbox{if } F\leq K, \end{cases} 
\end{eqnarray} 
which is reflected in figure \ref{fig:modulusBlack-Scholes} 
and again implies that the singularity at $\sigma=0$ is essential. 
By the put-call parity $f(\sigma)-f(-\sigma)$ is independent of $\sigma$.
\end{remark}

\begin{figure}[!bt]
\centering
 \includegraphics[width=.6\textwidth]{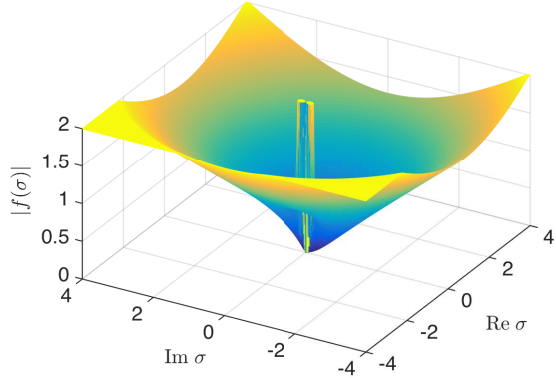}
 \caption{The modulus $|f(\sigma)|$ of the complex Black-Scholes price as a function of complex volatility $\sigma$. Note the gap along the real axis between the value at $0^-$ and $0^+$ as explained in remark~\ref{rmk:fractal_putcall}. The function has a similar behavior in a neighborhood of $\sigma = \infty$. The plot is truncated from above at 2 because in a neighbourhood of zero the surface goes infinitely many times to infinity.}\label{fig:modulusBlack-Scholes} 
\end{figure}

A striking consequence of the previous result is that the implied volatility equation has now infinitely many solutions near zero.
\begin{figure}[!bt]
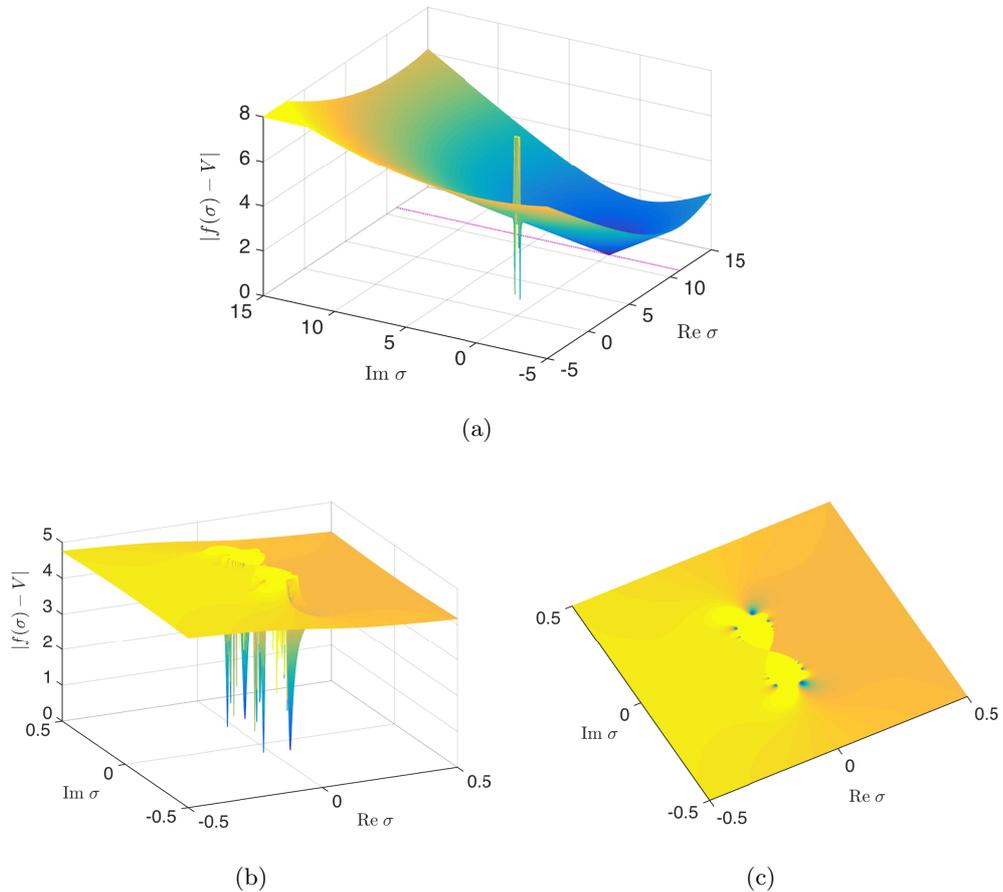

\centering
\subfloat[]{\includegraphics[width=.6\textwidth]{BSminusPV}\label{fig:modulusBlack-Scholes-V}}\\
\subfloat[]{\includegraphics[width=.53\textwidth]{BSminusPV_zoomin}\label{fig:modulusBlack-Scholes-PV_zoomin3d}}
\subfloat[]{\includegraphics[width=.53\textwidth]{BSminusPV_zoomin2}}
 \caption{The modulus of the calibration error $|f(\sigma)-V|$ as a function of complex volatility $\sigma$. (a) The minima of the surface correspond to the real implied volatility $\sigma=11.2$\,vol and to the infinitely many complex roots close to the origin; the plot is truncated from above at 8. (b) A blow-up of $|f(\sigma)-V|$ around the multiple complex roots close to the origin; the plot is truncated from above at 5. (c) Contour map of $|f(\sigma)-V|$. \label{fig:modulusBlack-Scholes-PV_zoomin}}
\end{figure}

\begin{corollary}\label{Casorati}
In each neighbourhood of 0, there are infinitely many complex values of the volatility $\sigma$ such that equation \eqref{Black-Scholespv} holds.
\end{corollary}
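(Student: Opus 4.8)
The plan is to obtain the corollary directly from the previous theorem, which exhibits $\sigma=0$ as an essential singularity of $f$, via the strong form of Picard's theorem. I would stress at the outset that the Casorati--Weierstrass theorem by itself is not sufficient: it only guarantees that the image of every punctured neighbourhood of $0$ is dense in $\mathbb{C}$, hence yields volatilities with $f(\sigma)$ arbitrarily close to the market price $V$, not exact solutions of \eqref{Black-Scholespv}. The right tool is the Great Picard theorem, which states that on every punctured neighbourhood of an essential singularity a holomorphic function attains each complex value infinitely often, with at most one exceptional value.

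Setting $g(\sigma):=f(\sigma)-V$, the solutions of \eqref{Black-Scholespv} near the origin are exactly the zeros of $g$, and $g$ inherits the essential singularity of $f$ at $0$. By the Great Picard theorem $g$ vanishes infinitely often in every punctured disk $0<|\sigma|<\delta$ --- which is precisely the assertion of the corollary --- unless $0$ is the single exceptional value of $g$, that is, unless $V$ is the exceptional value of $f$. The entire proof thus reduces to excluding this one possibility. To pin down the only candidate, I would combine the two facts recorded in remark~\ref{rmk:fractal_putcall}: that $f(-\sigma)$ is the negative of the corresponding put price, and that put-call parity fixes call minus put at $e^{-rT}(F-K)$; together these give $f(\sigma)+f(-\sigma)=e^{-rT}(F-K)$. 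Since $\sigma\mapsto-\sigma$ is a bijection of every punctured neighbourhood of $0$ onto itself, it preserves whether a value is attained finitely or infinitely often, so $f$ and $\sigma\mapsto f(-\sigma)=e^{-rT}(F-K)-f(\sigma)$ share the same exceptional value; any such value $a$ must therefore satisfy $a=e^{-rT}(F-K)-a$, forcing $a=\tfrac12 e^{-rT}(F-K)$.

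The hard part is the last step: showing that the traded price $V$ never equals this midpoint $\tfrac12 e^{-rT}(F-K)$. Here I would invoke the elementary no-arbitrage bounds for European options. For a call, $V$ strictly exceeds the discounted intrinsic value $e^{-rT}(F-K)^+$, which is already at least $\tfrac12 e^{-rT}(F-K)$ when $F>K$, while for $F\le K$ one has $\tfrac12 e^{-rT}(F-K)\le 0<V$; the put case is completely analogous under $F\leftrightarrow K$. In every case $V\neq\tfrac12 e^{-rT}(F-K)$, so $V$ is not the exceptional value of $f$ and is hence attained infinitely often in each neighbourhood of $0$, which is the corollary. I expect this no-arbitrage separation, together with the symmetry step that isolates the candidate exceptional value, to be the only non-formal ingredient; the remainder is a direct appeal to Picard's theorem.
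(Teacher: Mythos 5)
Your proof is correct, but it takes a genuinely different --- and strictly stronger --- route than the paper, which disposes of the corollary in one line by citing the preceding theorem together with the Weierstrass--Casorati theorem. Your opening objection is well taken: Casorati--Weierstrass only yields that $f$ maps every punctured neighbourhood of $0$ onto a dense subset of $\mathbb{C}$, i.e.\ that $\inf|f(\sigma)-V|=0$ near the origin, not that the value $V$ is actually attained (compare $e^{1/\sigma}$, which omits the value $0$ despite the essential singularity). So the paper's one-line proof, read literally, only establishes approximate solvability of \eqref{Black-Scholespv}, and your appeal to the Great Picard theorem closes that gap, at the cost of having to rule out that $V$ is the single Picard-exceptional value. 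Your exclusion argument is sound: the symmetry $f(-\sigma)=e^{-rT}(F-K)-f(\sigma)$, which follows from $\Phi(-z)=1-\Phi(z)$, combined with the invariance of ``attained finitely often near $0$'' under both the bijection $\sigma\mapsto-\sigma$ and the affine map $w\mapsto e^{-rT}(F-K)-w$, pins any exceptional value to the midpoint $\tfrac12 e^{-rT}(F-K)$; the no-arbitrage bounds $V>e^{-rT}(F-K)^+\geq\tfrac12 e^{-rT}(F-K)$ for $F\geq K$ and $V>0\geq\tfrac12 e^{-rT}(F-K)$ for $F\leq K$ then exclude it, and the put case is indeed symmetric. (Incidentally, you have silently corrected a sign slip in remark~\ref{rmk:fractal_putcall}: since $f(-\sigma)$ is minus the put price, parity makes $f(\sigma)+f(-\sigma)$, not $f(\sigma)-f(-\sigma)$, the $\sigma$-independent quantity $e^{-rT}(F-K)$, exactly as your argument uses.) What each approach buys: the paper's citation is short and suggestive but leaves exact attainment of $V$ unproved; yours delivers genuine roots in every neighbourhood of $0$ and, as a by-product, identifies the unique value that could conceivably fail to be attained --- information invisible in the paper's argument.
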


\begin{proof}
This is a consequence of the previous theorem and the Weierstrass-Casorati theorem \cite[Ch.~4, Thm.~9]{ahlfors1966}.
\end{proof}

Figure \ref{fig:modulusBlack-Scholes-PV_zoomin} displays the function $\vert f(\sigma)-V\vert $ for a fixed $V$ which according to corollary~\ref{Casorati} exhibits infinitely many complex zeros other than the real implied volatility.

In figures \ref{fig:modulusBlack-Scholes} -- \ref{fig:fractals} and figure \ref{fig:fractalJackel} we use market data of 28 November 2013 for a 1\,year at-the-money (ATM) USDJPY call option: the spot is $S=102.10$, the strike is $K=102.76$, and the ATM volatility is $\sigma=11.2$\,vol; the annual interest rate for JPY is $r=2.68\%$ and that for USD is $d=2.71\%$. 
The unit ``vol'' is a measure of volatility used habitually by practitioners, with $1 \text{ vol} = 1\%$.
We scale our complex plane in vols, so that the label 30 on the $x$ or $y$ axis means $30\%$ or $i30\%$. 

\FloatBarrier
\section{Implied volatility fractals}
\subsection{Newton-Raphson fractals}
Fractals arising from the Newton-Raphson algorithm have been studied in several 
articles \cite{curry1983, peitgen1984}. In fact the first fractals arose as an attempt to respond to the question by Cayley \cite{cayley1879}
on the loci of complex numbers converging to the multiple roots of a polynomial by the Newton-Raphson algorithm.

\begin{definition}
Given a function $g(x)$, for each positive integer $n$ we denote by $g^n$ the $n$-fold composition of $g$:
\begin{equation}
g^n(x)=\underbrace{g(g({\ldots}g(x){\ldots}))}_{n \text{ times}}.
\end{equation}
Then we can associate to each fixed point $x_*$ of $g$ its basin of attraction
\begin{equation}
B_g(x_*)=\left\{ x\in \mathbb{C}\vert \lim_{n\rightarrow \infty}g^n(x) =x_* \right\}.
\end{equation}
\end{definition}

If to solve a non-linear equation $y(x)=0$ we apply the Newton-Raphson iteration
\begin{equation}
x_{n+1}=x_n -\frac{y(x_n)}{y'(x_n)}
\end{equation}
and define
\begin{equation}
g(x) = x - \frac{y(x)}{y'(x)},
\end{equation}
then the initial values $x_0$ where the Newton-Raphson method converges to a given root $x_*$ are in the attraction basin $B_g(x_*)$. 

The Julia set $J_g$ is the boundary of the attraction basin $B_g(x_*)$ of a fixed point $x_*$ \cite{milnor2000}. It has been proved that when $g$ is a rational function the Julia set thus defined is independent of the fixed point and coincides with the closure of the repelling fixed points. Several illustrations of attraction basins coloured according to the convergence speed and also of Julia sets can be found in the book by Peitgen and Richter \cite{peitgen1986}.

\subsection{Fractals associated to implied volatility}
In this subsection, we present an empirical analysis of attraction basins and Julia sets for the Newton-Raphson method associated to equation \eqref{Black-Scholespv}. For simplification we use an equivalent equation given by J\"ackel \cite{jackel2006},
\begin{equation}\label{eq:Black-Scholescall_Jaeckelform}
b = h(\hat{\sigma})
\end{equation}
where $b := Ve^{rT}/\sqrt{FK}$,
\begin{equation}
h(\hat{\sigma}) := \theta e^{a/2}\Phi\left[\theta\left(\frac{a}{\hat{\sigma}}+\frac{\hat{\sigma}}{2}\right)\right]-\theta e^{-a/2}\Phi\left[\theta\left(\frac{a}{\hat{\sigma}}-\frac{\hat{\sigma}}{2}\right)\right],
\end{equation}
$a := \log(F/K)$, and $\hat{\sigma} := \sigma\sqrt{T}$.
The Newton-Raphson iteration to find the implied volatility from equation \eqref{eq:Black-Scholescall_Jaeckelform} is 
\begin{subequations}
\begin{align}
\hat{\sigma}_{n+1}
            &=\hat{\sigma}_{n} -\frac{h(\hat{\sigma}_n)-b}{h'(\hat{\sigma}_n)}\\
            &=\hat{\sigma}_{n} -\frac{\sqrt{2\pi}\left(h(\hat{\sigma}_n)-b\right)}{
            \exp{\left(-\frac{a^2}{2\hat{\sigma}_{n}^2} - \frac{\hat{\sigma}_{n}^2}{8}\right)}}.\label{eq:Black-Scholesiteration}
\end{align}
\end{subequations}
The termination criterion is 
\begin{equation}
\frac{\| \hat{\sigma}_{n+1}-\hat{\sigma}_{n} \|}{\| \hat{\sigma}_{n} \|}\leq \epsilon
\end{equation}
or $n = L$. Given an initial point $\hat{\sigma}_0$, a tolerance level $\epsilon$ and a maximum iteration number $L$, the Newton-Raphson iteration \eqref{eq:Black-Scholesiteration} terminates in three cases:
\begin{enumerate}
\item it converges to the real root;\label{stop2}
\item it converges to one of the many complex roots;\label{stop3}
\item it does not converge until the maximum iteration number is reached, or encounters other numerical problems.\label{stop1}
\end{enumerate}
We produced fractals by plotting each initial point $\hat{\sigma}_0$ in a different colour according to how the corresponding iterations terminate. Specifically, for the three termination cases above, we used the following colour scheme:
\begin{enumerate}
\item a shade of blue, linearly scaling from dark to light by the number of steps it takes to converge: the points in dark blue take fewer steps to converge than those in light blue;
\item a shade of red, linearly scaling from dark to light by the number of steps it takes to converge: the points in dark red take fewer steps to converge than those in light red;
\item black. 
\end{enumerate}

In this colouring scheme, the attraction basins are the regions that are not in black. In all figures, we used $1001 \times 1001$ initial points. Figure \ref{fig:fractals} shows an implied volatility fractal under different magnifications. Note that the enlargement of panel \ref{fig:fractal_1000} around the origin shown in panel~\ref{fig:fractal_04} is not observable in panel \ref{fig:fractal_1000} because of the limited resolution of the latter. 

\begin{figure}[!tb]
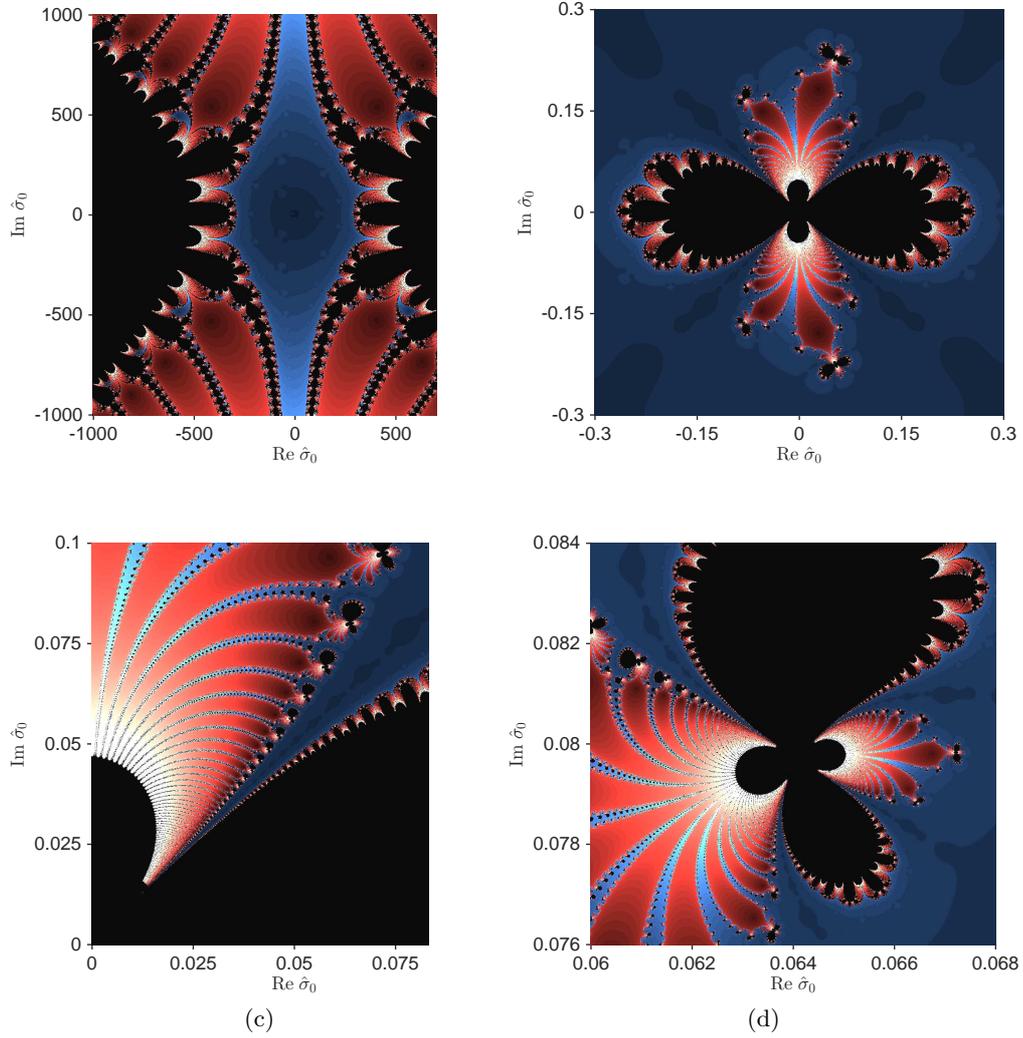

 \subfloat[]{\includegraphics[scale = 0.5]{fractal_A_br}\label{fig:fractal_1000}}\hfill
 \subfloat[]{\includegraphics[scale = 0.5]{fractal_B_br}\label{fig:fractal_04}}\\
 \subfloat[]{\includegraphics[scale = 0.5]{fractal_C_br}\label{fig:fractal_01}}
 \subfloat[]{\includegraphics[scale = 0.5]{fractal_D_br}\label{fig:fractal_001}}
\caption{(a) Implied volatility fractal for an ATM option with $\epsilon = 10^{-8}$, $L=100$. The axis scale is in vols. (b) A zoom-in of (a) around the origin. (c) A zoom-in of (b) in the first quadrant. (d) A zoom-in of one petal in (c).}\label{fig:fractals}
\end{figure}

Figure \ref{fig:fractalsDeltas} shows the fractals for options with different values of $\Delta=\partial f/\partial S_0$, i.e.\ the rate of change of the option price with respect to the change in the spot price of the underlying. In the Black-Scholes model, $\Delta = \theta e^{-dt}\Phi\left[\theta\left(a/(\sigma\sqrt{T})+\sigma\sqrt{T}/2\right)\right]$. We use call and put options with $\Delta = 25\%$ and $\Delta = 10\%$ because they are often used by practitioners to depict the volatility smile (the ATM call option with $\Delta = 50\%$ is shown in figure \ref{fig:fractals}). The volatility smile is an important measurement indicating that the implied volatility changes with the strike. We show the fractals in the upper half of the complex plane because the fractals are symmetric with respect to the real axis, as can be seen in figure \ref{fig:fractals}.
Notice the difference in the number of attraction basins for different values of $\Delta$ and the similarity between call and put options with the same $\Delta$.

\begin{figure}[!tb]
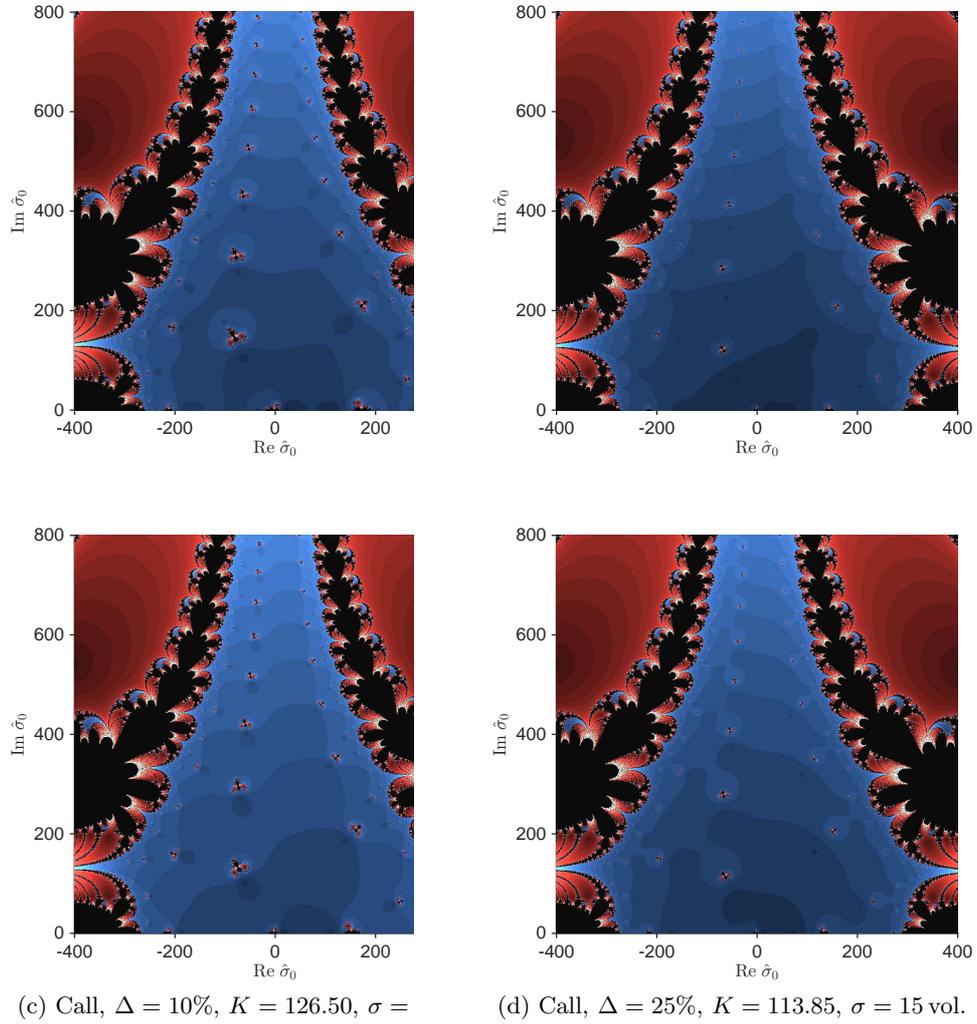

\centering
\subfloat[Put, $\Delta=-10\%$, $K=76.74$, $\sigma=25$\,vol.]{\includegraphics[scale = 0.48]{fractal_10P_s}}\hfill
\subfloat[Put, $\Delta=-25\%$, $K=91.40$, $\sigma=20$\,vol.]{\includegraphics[scale = 0.48]{fractal_25P_s}}\\
\subfloat[Call, $\Delta=10\%$, $K=126.50$, $\sigma=16$\,vol.]{\includegraphics[scale = 0.48]{fractal_10C_s}}\hfill
\subfloat[Call, $\Delta = 25\%$, $K=113.85$, $\sigma=15$\,vol.]{\includegraphics[scale = 0.48]{fractal_25C_s}}
\caption{Implied volatility fractals for options with $\epsilon = 10^{-8}$, $L=100$ and various values of $\Delta$.}\label{fig:fractalsDeltas}
\end{figure}

Figure \ref{fig:fractalJackel} shows a fractal for an ATM option with the same parameters as figure \ref{fig:fractals} and J\"ackel's aforementioned modification of the equation to be solved \cite{jackel2006}. This modification subtracts from both sides of equation \eqref{eq:Black-Scholescall_Jaeckelform} the intrinsic value $\tau: = 2\theta H(\theta a)\sinh(a/2)$, where $H(\cdot)$ is the Heaviside function, and solves the equivalent form on a logarithmic scale
\begin{equation}\label{eq:Jaeckelmodification}
\log\frac{h(\hat{\sigma}) - \tau}{b - \tau} = 0.
\end{equation}
The corresponding Newton-Raphson iteration is
\begin{subequations}
\begin{align}
\hat{\sigma}_{n+1}
            &=\hat{\sigma}_n -\frac{\log\frac{h(\hat{\sigma}_n) - \tau}{b - \tau}}{\frac{1}{h(\hat{\sigma}_n) - \tau}h'(\hat{\sigma}_n)}\\
            &=\hat{\sigma}_n -\frac{\sqrt{2\pi}(h(\hat{\sigma}_n) - \tau)\log\frac{h(\hat{\sigma}_n) - \tau}{b - \tau}}{
            \exp{\left(-\frac{a^2}{2\hat{\sigma}_{n}^2} - \frac{\hat{\sigma}_{n}^2}{8}\right)}}.\label{eq:JackelIteration}
\end{align}
\end{subequations}
Notice that this modification largely reduces the red area (the initial points that lead to a convergence to complex roots) and the blue area (the initial points that lead to a convergence to the real root). The enlargement panel \ref{fig:JackelNearOrigin} shows the exquisite fractal near the origin.

\begin{figure}[!tb]
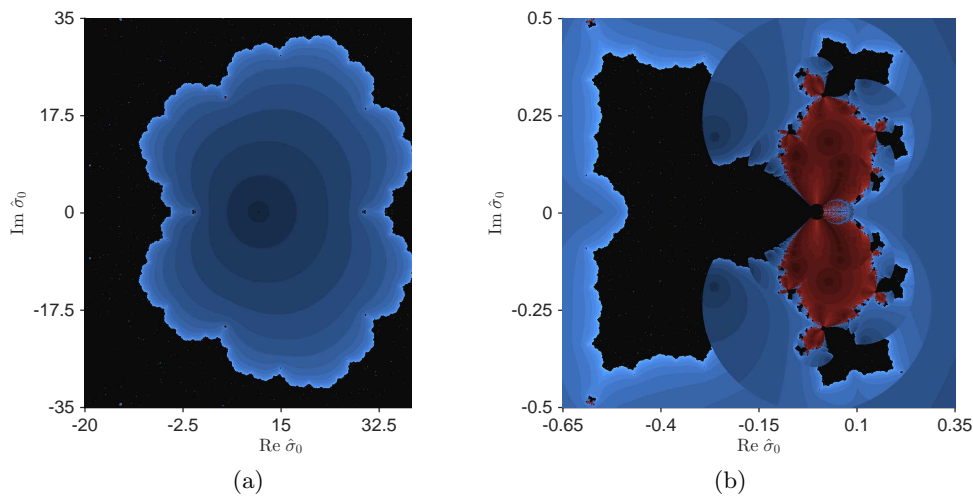

\centering
\subfloat[]{\includegraphics[scale = 0.48]{fractal_Jackel_2}}\hfill
\subfloat[]{\includegraphics[scale = 0.48]{fractal_Jackel_1}\label{fig:JackelNearOrigin}}
\caption{Implied volatility fractal for an ATM option with J\"ackel's modification, equation~\eqref{eq:Jaeckelmodification}.\label{fig:fractalJackel}}
\end{figure}

\FloatBarrier

\section{Conclusion}
We extended the Black-Scholes price as a function of the volatility to an analytic function on $\mathbb{C}_*=\mathbb{C}\setminus\{ 0\}$ and showed that the singularities at zero and infinity are essential. As a result, the objective function for finding the implied volatility has infinitely many complex solutions near zero. Following the professional practice, we adopt the Newton-Raphson method to resolve for the implied volatility. The chaotic nature of the calibration of the implied volatility is described in the complex plane by means of the associated fractal Julia sets. Fractals associated with the iterative process are shown for different moneyness values of this interesting problem very common in the financial industry. Among other things, these fractals visualise dramatically the effect of a modification suggested by J\"ackel to improve the stability and convergence of the search for the implied volatility.

\end{document}